\newtheorem{theorem}{Theorem}
\newtheorem{lemma}[theorem]{Lemma}
\newtheorem{corollary}[theorem]{Corollary}
\newtheorem{definition}[theorem]{Definition}
\definecolor{darkgreen}{RGB}{0,100,0}
\definecolor{firebrick}{RGB}{178,34,34}
\newcommand{\R}{{\mathbb{R}}}
\newcommand{\pth}[1]{\ensuremath{\left(#1\right)}}
\newcommand{\calS}{{\cal S}}
\newcommand{\calG}{{\cal G}}
\newcommand{\disc}{{\rm disc}}
\newcommand{\e}{{\varepsilon}}
\newcommand{\calD}{{\mathcal{D}}}
\newcommand{\calO}{{\mathcal{O}}}
\newcommand{\calP}{{\mathcal{P}}}
\newcommand{\calR}{{\mathcal{R}}}
\newcommand*\samethanks[1][\value{footnote}]{\footnotemark[#1]}
\title{Sparse Geometric Set Systems and the Beck-Fiala Conjecture}
\date{}
\author[1]{Kunal Dutta
\thanks{
Supported by the Polish NCN-SONATA Grant no. 2019/35/D/ST6/04525 
\emph{(Probabilistic tools for high-dimensional geometric inference, topological data analysis and large-scale networks)}.}
}
\author[2]{Arijit Ghosh 
\samethanks[1]
}
\affil[1]{Faculty of Mathematics, Informatics, and Mechanics, University of Warsaw, Poland. email: K.Dutta@mimuw.edu.pl}
\affil[2]{Indian Statistical Institute, Kolkata, India email: arijitiitkgpster@gmail.com}
\begin{document}

\maketitle

\begin{abstract}
   We investigate the combinatorial discrepancy of geometric set systems having bounded shallow cell complexity in the \emph{Beck-Fiala} setting, where each 
   point belongs to at most $t$ ranges. For set systems with shallow cell complexity $\psi(m,k)=g(m)k^{c}$, where $(i)$ $g(m) = o(m^{\e})$ 
   for any $\e\in (0,1],$ $(ii)$ $\psi$ is  non-decreasing in $m$, and $(iii)$ $c>0$ is independent of $m$ and $k$, we get a discrepancy bound of 
   \[ O\pth{\sqrt{\pth{\log n+\pth{t^{c}g(n)}^{\frac{1}{1+c}}}\log n}}.\]
   For $t=\omega(\log^2 n)$, in several cases, such as for set systems of points and half-planes / disks / pseudo-disks in $\R^2$, points and 
   orthants in $\R^3$ etc., 
   these bounds are $o(\sqrt{t})$, which verifies (and improves upon) the conjectured bound of Beck and Fiala~\emph{(Disc. Appl. Math., 1981)}. \\

   Our bounds are obtained by showing the existence of \emph{matchings with low crossing number}, using the multiplicative weights update method of Welzl 
   \emph{(SoCG, 1988)}, together 
   with the recent bound of Mustafa \emph{(Disc. Comp. Geom., 2015)} on \emph{shallow packings} of set systems in terms of their shallow cell complexity. 
   For set systems of shallow cell complexity $\psi(m,k)=m^{c_1}g(m)k^{c}$, we obtain matchings with crossing number at most 
   \[ O\pth{\pth{n^{c_1}g(n)t^{c}}^{\frac{1}{1+c_1+c}}}.\]
   These are of independent interest.
\end{abstract}

\section{Introduction}
\label{sec:intro}

   Given a set system $(X,\calR)$ with ground set $X$ and a collection $\calR\subset 2^X$ of subsets of $X$, the \emph{combinatorial 
discrepancy} of the system is given by 
\[ 
    \disc(\calR) := 
    \min_{\chi: X \rightarrow \{\pm 1\}}\max_{S\in \calR}\left|\sum_{i\in S} \chi(i) \right|,
\]
that is, the maximum imbalance over all sets in $\calR$, minimized over all possible $2$-partitions of $X$.
Using the incidence matrix $A$ of the system $(X,\calR)$, an equivalent linear algebraic formulation -- the \emph{vector balancing} 
problem can be stated as
\[ \disc(A) := \min_{x\in \{-1,1\}^n} \|Ax\|_{\infty}.\]
For a given class of set systems, the problem of \emph{discrepancy minimization} seeks to establish bounds on the 
discrepancy of systems in the class, as a function of the size of the ground set, or other parameters. Besides its 
inherent interest, discrepancy minimization has many applications in several areas of mathematics as well as computer 
science and related subjects~\cite{Chaz01,Matousekdiscbook09}.  

\paragraph*{Beck-Fiala and Komlos Conjectures}
In their seminal work which initiated the use of the term ``discrepancy" as well as the study of the discrepancy of set systems, 
Beck and Fiala~\cite{Beck-Fiala} showed that any set system where each element belongs to at most $t$ sets, has discrepancy at most $2t-1$. 
They conjectured that this bound could be improved to $O\pth{\sqrt{t}}$. In linear algebraic terms, the closely related generalization 
known as the \emph{Komlos conjecture}, states that for a real matrix with $\ell_2$ norm of each column at most $1$, the discrepancy is $O(1)$. 
Despite significant progress in partial results and 
understanding of discrepancy theory in the last $4$ decades, these tantalizing conjectures have remained open. 

\paragraph*{Early and Recent Progress}
Since the seminal results of Beck and Fiala, much progress has been made toward resolving 
the Beck-Fiala and Komlos conjectures. Early breakthroughs in the area include those of Spencer
~\cite{Sp85} and Gluskin~\cite{Gluskin89} who showed that the Komlos conjecture holds for $t= \Theta(n)$.
Later Srinivasan~\cite{SrinivasanSODA97} and Banaszczyk~\cite{Banas} proved bounds for general $t$. The latter's 
bound of $5\sqrt{t\log n}$ remains the best current result on the Komlos conjecture. \\

The above bounds were non-constructive and it was even conjectured that efficient algorithms to 
construct colorings with the promised discrepancy bounds, did not exist. However in a major breakthrough, Bansal~\cite{Ba10}
gave an efficient algorithm to construct a coloring matching Spencer's discrepancy bounds for $|\calR|= \Theta(n)$. 
This has been followed by a large number of recent results, both algorithmic and non-algorithmic, for upper and lower 
bounds, as well as many applications and generalizations, see~\cite{LM12,BansalDG16,BansalG16a,DadushGLN16,LevyRR16,https://doi.org/10.1112/S0025579317000250,10.1145/3188745.3188850}.

\paragraph*{Geometric Set Systems}
The discrepancy of set systems with a geometric interpretation, such as for example of points and half-planes in $\R^2$ and $\R^{3}$ or points and disks in $\R^{2}$,
has been of significant theoretical and practical importance, as it relates to several areas like computational geometry, statistical and 
machine learning, algorithmic analysis, database theory, etc.~\cite{Matousekdiscbook09,Larsen14,DBLP:conf/stacs/Larsen19}. 
Questions such as Tu\'snady's problem have been studied since the 
early years of discrepancy theory~\cite{Beck81c,DBLP:journals/dm/Beck89}. Other notable results include those of points and 
projective planes (Spencer~\cite{SPENCER1988213}), primal systems 
of bounded VC dimension (Matou\v{s}ek~\cite{M99}), systems of bounded dual VC dimension, etc.(Matou\v{s}ek-Welzl-Wernisch~\cite{MatousekWW93-discrepancy-approx-VC}). \\

More recently Ezra~\cite{E14} proposed the notion of size-sensitive discrepancy, and further asked about the discrepancy of geometric set 
systems with bounded degree. Using the packing and chaining framework of Dutta, Ezra and Ghosh~\cite{DuttaEG15-shallow-packings}, together with the 
partial coloring based algorithm of Lovett-Meka (or a similar algorithm), it is comparatively 
straightforward to obtain bounds of $O\pth{t^{1/4}\cdot\log n}$ for points and half-planes in $\R^2$. To some extent, this setting has been 
considered in the unpublished work~\cite{DG14}, 
but the bounds have an extra $\sqrt{\log\log t}\log n$ factor (as in the above example) which stems from using 
the partial coloring approach and a suboptimal assignment of parameters. \footnote{Moreover these bounds do not appear 
in the published version.} To the best of our knowledge,
other than the above-mentioned unpublished drafts, there are no published or publicly available works which consider the question of 
geometric set systems with bounded degree of elements. However interesting questions remain to be asked for such systems.
For example it seemed interesting to ask if the $O\pth{t^{1/4}\log n}$ bound for points and half-planes in the Beck-Fiala setting, 
can be improved to $O\pth{t^{1/4}\sqrt{\log n}}$ or better, using (for instance) the algorithmic frameworks of Bansal and others.

\paragraph*{Union and Shallow Cell Complexity}
Set systems with low \emph{union complexity} or \emph{shallow cell complexity} constitute a fairly general class and include several important cases such as points and 
half-spaces in $\R^2$ and $\R^3$, points and (pseudo)disks in $\R^2$ and several others (see, for example~\cite[Chapter 4]{mustafa2022samplingbook}). Moreover, 
since dual systems of low union complexity also have low 
shallow cell complexity of the corresponding primal systems (e.g.~\cite{MV17}), they have been studied in a series of works e.g.~\cite{ClarksonV07,V10,CGKS12,DBLP:journals/jocg/BansalP16,MDG17,DuttaGJM19,DuttaGM22}
where better bounds have been obtained for them for several structures of interest, such as unweighted and weighted epsilon nets, epsilon approximations, 
relative approximations, $\epsilon$-brackets, combinatorial Macbeath regions, etc.

\subsection*{Our Contribution}

We investigate the discrepancy of geometric set systems with linear or near-linear shallow cell complexity, and each element belonging 
to at most $t$ sets, where $t$ is a given parameter independent of $n$. Our discrepancy bounds are stated below. 
\begin{theorem}
\label{thm:discr-bd-crossn-degt}
    Let $(X,\calR)$ be a set system with shallow cell complexity $\psi(m,k)=g(m)k^c$, where $g(m) = o(m^{\e})$ for every $\e\in (0,1)$, is a non-decreasing 
    function of $m$ and $c>0$ is a constant independent of $n$ and $k$, and each element of $X$ belongs to at most $t$ sets of 
    $\calR$. Then the discrepancy of $(X,\calR)$ is at most 
   \[ O\pth{\sqrt{\pth{\log n+\pth{t^{c}g(n)}^{\frac{1}{1+c}}}\log n}}.\]
    In particular, for the following families of set systems with degree at most $t$, the discrepancy is bounded as below. 
    \begin{enumerate}
        \item Points and half-planes in $\R^2$: $O\pth{\sqrt{(\log n+t^{1/2})\log n}}$.
        \item Points and homothets of a convex body in $\R^2$: $O\pth{\sqrt{(\log n+t^{1/2})\log n}}$.
        \item Points and disks in $\R^2$: $O\pth{\sqrt{(\log n+t^{1/2})\log n}}$.
        \item Points and pseudodisks in $\R^2$: $O\pth{\sqrt{(\log n+ t^{1/2})\log n}}$.
        \item Points and $\alpha$-fat triangles in $\R^2$: $O\pth{\sqrt{(\log n+(t\cdot\log^*n)^{1/2})\log n}}$.
        \item Points and objects with linear union complexity: $O\pth{\sqrt{(\log n+t^{1/2})\log n}}$.
        \item Points and locally $\gamma$-fat semi-algebraic objects in $\R^2$ with bounded description complexity: 
              $O\pth{\sqrt{(\log n+t^{1/2}\cdot 2^{O(\log^* n)})\log n}}$.
        \item Points and half-spaces in $\R^3$: $O\pth{\sqrt{(\log n + t^{2/3})\log n}}$.
        \item Points and orthants in $\R^2$ and $\R^3$:  $O\pth{\sqrt{(\log n+t^{1/2})\log n}}$.
    \end{enumerate}
\end{theorem}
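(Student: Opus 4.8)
The plan is to derive the bound from a structural statement of independent interest — the existence of a perfect matching on $X$ with small crossing number against $\calR$ — and then apply a single random $\pm 1$ colouring to the matched pairs. Concretely, I would first prove: $(X,\calR)$ admits a perfect matching $M$ (pair up the points of $X$, discarding one if $|X|$ is odd) such that every $R\in\calR$ contains exactly one endpoint of at most $\kappa:=O\pth{\log n+\pth{g(n)t^{c}}^{\frac{1}{1+c}}}$ edges of $M$; for $\psi(m,k)=m^{c_1}g(m)k^c$ the same argument gives $\kappa=O\pth{\pth{n^{c_1}g(n)t^{c}}^{\frac{1}{1+c_1+c}}}$, the additive $\log n$ being absorbed when $c_1>0$. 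Since $g$ is sub-polynomial, $|\calR|=n^{O(1)}$ and the VC dimension of $(X,\calR)$ is $O(1)$, facts I will use throughout.

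The matching is built by Welzl's iterative reweighting scheme. Maintain weights $w(\cdot)$ on $\calR$, all initially $1$; in round $i$, with $m_i=n-2(i-1)$ points still unmatched and total current weight $W_i$, choose the unmatched pair $\{p,q\}$ minimizing $w\pth{\mathrm{sep}(p,q)}$, where $\mathrm{sep}(p,q)=\calR_p\,\triangle\,\calR_q$ with $\calR_z=\{R\in\calR:z\in R\}$ is the family of ranges separating $p$ and $q$; add $\{p,q\}$ to $M$ and double $w(R)$ for each $R\in\mathrm{sep}(p,q)$. A range's weight doubles exactly when it separates a matched pair, so any $R$ with crossing number $x$ ends with weight $2^{x}$, whence $\kappa\le\log_2 W_{n/2}$ and it remains to bound the growth of $W_i$. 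Here the degree hypothesis enters: $|\mathrm{sep}(p,q)|\le|\calR_p|+|\calR_q|\le 2t$. Hence if \emph{no} unmatched pair had separating weight below $\delta W_i$, those $m_i$ points would form a $\delta W_i$-separated family in the (weighted) dual set system $\pth{\calR,\{\calR_z\}_{z\in X}}$ in which every range has size at most $2t$; Mustafa's shallow-packing bound — applied with this shallowness parameter, using the bounded VC dimension and the shallow cell complexity $\psi$ of $(X,\calR)$ to control the dual — bounds the size of such a family by some $P(\delta)$ depending on $\delta,t,g,|\calR|$. Taking $\delta=\delta_i$ as the largest value with $P(\delta_i)\ge m_i$ produces a pair with $w\pth{\mathrm{sep}(p,q)}\le\delta_i W_i$, so $W_{i+1}\le W_i(1+\delta_i)$ and $W_{n/2}\le|\calR|\cdot\exp\pth{\sum_i\delta_i}$; substituting $P$ and evaluating the harmonic-type sum $\sum_i\delta_i$ over $m_i=2,4,\dots,n$ gives $\log_2 W_{n/2}=O(\log n)+O\pth{\pth{g(n)t^{c}}^{\frac{1}{1+c}}}=\kappa$. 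I expect this per-round step to be the main obstacle: invoking the shallow-packing bound correctly for the \emph{weighted} dual (where weighted degrees may exceed $2t$, so one must either cap weights or argue the number of short traces is still governed by $\psi$) and balancing the exponents so the sum telescopes to precisely the $\tfrac{1}{1+c}$ (resp. $\tfrac{1}{1+c_1+c}$) exponent.

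Finally, colour each matched pair by an independent fair coin, giving $+1$ to one endpoint and $-1$ to the other (colour a discarded point arbitrarily). For $R\in\calR$, each edge with both endpoints inside $R$ or both outside contributes $0$ to $\sum_{i\in R}\chi(i)$, so that sum is a $\pm1$ sum over the $\le\kappa$ edges crossing $R$; by Hoeffding's inequality $\Probability\pbrcx{\bigl|\sum_{i\in R}\chi(i)\bigr|>\lambda}\le 2e^{-\lambda^2/(2\kappa)}$, and with $\lambda=\Theta\pth{\sqrt{\kappa\log|\calR|}}=\Theta\pth{\sqrt{\kappa\log n}}$ a union bound over the $n^{O(1)}$ ranges yields a colouring with $\disc(\calR)=O\pth{\sqrt{\kappa\log n}}=O\pth{\sqrt{\pth{\log n+\pth{t^{c}g(n)}^{\frac{1}{1+c}}}\log n}}$. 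The itemized cases then follow by inserting the known shallow cell complexities into this bound — $\psi(m,k)=\Theta(k)$ for half-planes, homothets of a convex body, disks, pseudo-disks and objects of linear union complexity in $\R^2$ and for orthants in $\R^2,\R^3$; $\psi(m,k)=O(k\log^* m)$ for $\alpha$-fat triangles; $\psi(m,k)=k\cdot 2^{O(\log^* m)}$ for locally $\gamma$-fat semialgebraic objects of bounded description complexity; and $\psi(m,k)=\Theta(k^2)$ for half-spaces in $\R^3$.
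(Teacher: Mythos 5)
Your proposal follows essentially the same route as the paper: a perfect matching with low crossing number built by Welzl's multiplicative-weights reweighting, with Mustafa's Shallow Packing Lemma supplying the short-edge bound at each round, followed by a random $\pm 1$ colouring of the matched pairs, a Hoeffding/Chernoff estimate, and a union bound over the $|\calR|=n^{O(1)}$ ranges. The subtlety you flag about the weighted dual is genuine, but the paper leaves it just as unaddressed: its Short Edge Lemma applies the packing bound to the multiset $T$ of reweighted ranges and simply asserts that each $D_x=\{S\in T: x\in S\}$ has size at most $t$, which is correct only if the degree is counted without multiplicity, so your concern identifies a step the paper also glosses over rather than a defect peculiar to your argument.
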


These bounds match or improve the known, as well as the conjectured general bounds for systems of bounded degree. 
For instance, for set systems of points and half-planes in $\R^2$, we get the bound of 
   \[ O\pth{\pth{\pth{\log n+\sqrt{t}}\log n}^{1/2}},\] 
   which is $O\pth{\sqrt{t}}$ for $t = \Omega(\log n)$.
For $t= \omega\pth{\log^2 n}$, we have $\sqrt{\log n} = o(t^{1/4})$, so the $O\pth{t^{1/4}\sqrt{\log n}}$ bound improves upon the conjectured 
$O\pth{\sqrt{t}}$ bound of Beck and Fiala. Thus, they give much better bounds even without using the random walk based framework and intricate analysis of 
many recent discrepancy minimization algorithms e.g.~\cite{LM12,BansalG16a,Rothvoss16}. \\

More generally, we get the following corollary of Theorem~\ref{thm:discr-bd-crossn-degt}.
\begin{corollary}
    The Beck-Fiala conjecture holds for the following set systems having maximum degree $t$, with the given constraints on $t$.   
    \begin{enumerate}
        \item Points and half-planes in $\R^2$: $t=\Omega\pth{\log^2 n}$.
        \item Points and homothets of a convex body in $\R^2$: $t=\Omega\pth{\log^2 n}$.
        \item Points and disks in $\R^2$: $t=\Omega\pth{\log^2 n}$.
        \item Points and pseudodisks in $\R^2$: $t=\Omega\pth{\log^2 n}$.
        \item Points and $\alpha$-fat triangles in $\R^2$: $t=\Omega\pth{\log^2 n\log^*n}$.
        \item Points and objects with linear union complexity: $t=\Omega\pth{\log^2 n}$.
        \item Points and locally $\gamma$-fat semi-algebraic objects in $\R^2$ with bounded description complexity: 
              $t=\Omega\pth{2^{O(\log^* n)})\log^2 n}$.
        \item Points and half-spaces in $\R^3$: $t=\Omega\pth{\log^3 n}$.
        \item Points and orthants in $\R^2$ and $\R^3$:  $t=\Omega\pth{\log^2 n}$.
    \end{enumerate}
\end{corollary}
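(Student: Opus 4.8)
The plan is to read the corollary off directly from Theorem~\ref{thm:discr-bd-crossn-degt}. For each of the nine families I would substitute the appropriate shallow cell complexity parameters $c$ and $g$ into the discrepancy bound $O\pth{\sqrt{\pth{\log n+\pth{t^{c}g(n)}^{1/(1+c)}}\log n}}$ and then isolate the slowest-growing $t=t(n)$ for which this quantity is $O\pth{\sqrt{t}}$ --- which is exactly the Beck--Fiala assertion for that family.

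First I would assemble the shallow cell complexities, all of which are available in the literature cited in the paragraph on union and shallow cell complexity: points with half-planes, disks, pseudodisks, homothets of a convex body, objects of linear union complexity, and orthants in $\R^2$ and $\R^3$ have $\psi(m,k)=O(k)$, i.e.\ $c=1$ and $g\equiv O(1)$; points with half-spaces in $\R^3$ have $\psi(m,k)=O(k^2)$, i.e.\ $c=2$ and $g\equiv O(1)$; points with $\alpha$-fat triangles have $\psi(m,k)=O\pth{k\log^*m}$, i.e.\ $c=1$ and $g(m)=O(\log^*m)$; and points with locally $\gamma$-fat semialgebraic sets of bounded description complexity have $\psi(m,k)=k\cdot 2^{O(\log^*m)}$, i.e.\ $c=1$ and $g(m)=2^{O(\log^*m)}$. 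In each case $g$ is non-decreasing and $g(m)=o(m^\e)$ for every $\e\in(0,1)$ (since $\log^*m$ and $2^{O(\log^*m)}$ grow more slowly than any fixed positive power of $m$), so the hypotheses of Theorem~\ref{thm:discr-bd-crossn-degt} are satisfied.

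Next I would solve $\sqrt{\pth{\log n+\pth{t^{c}g(n)}^{1/(1+c)}}\log n}=O\pth{\sqrt{t}}$. Squaring and treating the two summands separately, this is equivalent to the pair of requirements $\log^2 n=O(t)$ and $\pth{t^{c}g(n)}^{1/(1+c)}\log n=O(t)$; raising the second to the power $1+c$ and cancelling $t^{c}$ turns it into $g(n)\log^{1+c}n=O(t)$. Hence the bound is $O\pth{\sqrt{t}}$ precisely when $t=\Omega\pth{\log^2 n+g(n)\log^{1+c}n}$. Substituting the parameters above yields $t=\Omega\pth{\log^2 n}$ for every family with $c=1$ and $g=O(1)$; $t=\Omega\pth{\log^3 n}$ for half-spaces in $\R^3$, where $c=2$ makes $\log^{1+c}n=\log^3 n$ the dominant term; $t=\Omega\pth{\log^2 n\log^*n}$ for $\alpha$-fat triangles; and $t=\Omega\pth{2^{O(\log^*n)}\log^2 n}$ for the locally $\gamma$-fat semialgebraic case --- exactly the stated thresholds.

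The substantive work has already been absorbed into Theorem~\ref{thm:discr-bd-crossn-degt}, so I expect no genuine obstacle here: the only points needing care are quoting the correct shallow cell complexity for each geometric family and checking that the slowly-growing prefactors $\log^*m$ and $2^{O(\log^*m)}$ obey the subpolynomial-growth hypothesis of the theorem, both of which are routine; after that the corollary is just bookkeeping of the two inequalities above.
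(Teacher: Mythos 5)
Your proposal is correct and follows essentially the same route as the paper: the corollary is read off from Theorem~\ref{thm:discr-bd-crossn-degt} by substituting each family's shallow cell complexity parameters and solving for when the bound $O\pth{\sqrt{\pth{\log n+\pth{t^{c}g(n)}^{1/(1+c)}}\log n}}$ becomes $O\pth{\sqrt{t}}$, giving exactly the threshold $t=\Omega\pth{\log^2 n+g(n)\log^{1+c}n}$ you derive. Your choice of $c=2$ for half-spaces in $\R^3$ (yielding $t=\Omega(\log^3 n)$) is the correct one and matches the theorem's stated $t^{2/3}$ exponent, notwithstanding the paper's table entry.
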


Our techniques are based upon showing the existence of \emph{matchings with low crossing number} using the well-known multiplicative weights update 
method of Welzl~\cite{10.1145/73393.73397}, together with the shallow packing bound of Mustafa~\cite{M-shallow-packing-2016} and Dutta, Ezra and Ghosh~\cite{DuttaEG15-shallow-packings}. 
These crossing number bounds, which are of independent interest, are as below.

\begin{theorem}
\label{thm:low-cross-match-dual}
Let $(X,\calR)$ be an $n$-point set system, with $|\calR|\geq n$, generated by intersections of $X$ with a class of objects having dual shatter 
dimension at most $d$ and shallow cell complexity $\psi(m,k) = m^{c_1}\cdot g(m)k^{c}$, where
$c_1,c \geq 0$ are independent of $m$ and $k$, and $g(m)$ is a non-decreasing function of $m$ such that $g(m) = o(m^{\e})$ for every $\e \in (0,1)$. 
Further, let $(X,\calR)$ have the property that each point of $X$ belongs to at most $t$ sets.
Then there exists a matching of the points of $X$, with crossing number at most 
\[ O\pth{\pth{n^{c_1}t^{c} g(n)}^{\frac{1}{1+c_1+c}}}.\]
In particular, if $c_1=0$, then the crossing number 
is at most 
\[ O\pth{\pth{t^{c} g(n)}^{\frac{1}{1+c}}}.\]
\end{theorem}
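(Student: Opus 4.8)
The plan is to run Welzl's multiplicative weights (``iterative reweighting'') scheme to build the matching, and to replace the classical VC-dimension packing lemma in the per-round analysis by the shallow packing bound, which is where the degree bound $t$ (rather than $n$) and the shallow cell complexity $\psi$ enter.

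\emph{Setup.} Maintain a weight function $w_i\colon\calR\to\Z_{\ge 1}$, with $w_0\equiv 1$, and a set $P_i$ of still-unmatched points, with $P_0=X$. In round $i$, among all pairs $\{p,q\}\subseteq P_{i-1}$ pick one minimizing $w_{i-1}(\mathrm{cross}(pq))$, where $\mathrm{cross}(pq)=\{S\in\calR:|S\cap\{p,q\}|=1\}$; add $\{p_i,q_i\}$ to the matching, delete both endpoints, and double $w$ on every range of $\mathrm{cross}(p_iq_i)$. After $n/2$ rounds, a range $S$ with $c(S)$ matching edges crossing it has final weight $2^{c(S)}$, so the crossing number is at most $\log_2\!\pth{\sum_S w_{\mathrm{final}}(S)}$; and since doubling the crossed ranges raises the total weight by exactly $w_{i-1}(\mathrm{cross}(p_iq_i))$, one gets that the crossing number is at most $\log_2|\calR|+\tfrac{1}{\ln 2}\sum_i\lambda_i$ provided each round $i$ admits a pair with $w_{i-1}(\mathrm{cross})\le\lambda_i\,w_{i-1}(\calR)$.

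\emph{The per-round bound (heart of the argument).} Fix round $i$ and let $m_i=|P_{i-1}|$. First bound the number $K_i$ of distinct traces $\{S\cap P_{i-1}:S\in\calR\}$: traces of size $\le k$ number at most $m_i\psi(m_i,k)=m_i^{1+c_1}g(m_i)k^{c}$ by the shallow cell complexity, while traces of size $>k$ number at most $m_it/k$ since $\sum_S|S\cap P_{i-1}|\le m_it$ by the degree bound; optimizing $k$ gives $K_i=O\pth{m_i^{1+c_1/(1+c)}\,(t^{c}g(m_i))^{1/(1+c)}}$. Next, suppose that \emph{no} pair $p,q\in P_{i-1}$ had $w_{i-1}(\mathrm{cross}(pq))\le\lambda_i\,w_{i-1}(\calR)$. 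Then the dual traces $p^{\ast}:=\{S\in\calR:p\in S\}$, $p\in P_{i-1}$ -- each of size at most $t$, because each point has degree at most $t$ -- form a $\lambda_i w_{i-1}(\calR)$-separated family of $t$-shallow sets in the (weighted) dual range space on $\calR$, a system of dual shatter dimension $d$ whose shallow cell complexity is governed by $\psi$. The shallow packing bound of Mustafa and of Dutta--Ezra--Ghosh then caps the size of such a family; since the family has $m_i$ members, choosing $\lambda_i$ just below the resulting threshold yields a good pair, and working out the optimization one gets $\lambda_i=O\pth{(n^{c_1}t^{c}g(n))^{1/(1+c_1+c)}/m_i}$ up to lower-order factors. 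Summing, $\sum_{i=1}^{n/2}\lambda_i=O\pth{(n^{c_1}t^{c}g(n))^{1/(1+c_1+c)}}$ (an integral/harmonic-type sum), so the crossing number is at most $\log_2|\calR|+O\pth{(n^{c_1}t^{c}g(n))^{1/(1+c_1+c)}}$, and the $\log_2|\calR|=O(\log n)$ term is absorbed using $|\calR|\ge n$ and the growth hypotheses on $\psi$; the case $c_1=0$ is the stated specialization.

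\emph{Main obstacle.} The shallow packing bound is a statement about an \emph{unweighted} set system, whereas after a few reweighting rounds the dual traces are no longer $t$-shallow in the rescaled system (a heavily reweighted range carries a lot of mass). The fix is to quarantine at each round the $O(K_i)$ ``heavy'' ranges -- at most $2t$ of which can cross any fixed pair, and each of which is ``light'' relative to the reduced family since $w_{i-1}(S)\le w_{i-1}(\calR)/K_i$ for the rest -- and to apply the packing bound only to the light part; getting the round-constants and the exact exponent $1/(1+c_1+c)$ out of this split is the delicate step, and it is also where the dual shatter dimension $d$ is used (only through the polynomial-in-parameters factors of the packing bound). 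A secondary point is checking that the shallow cell complexity hypothesis on the class of objects indeed controls the dual subsystem appearing at each round; for the objects in the applications this is standard, the primal and dual shallow cell complexities there agreeing up to constant factors.
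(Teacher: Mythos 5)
Your overall route is the paper's: build the matching greedily under Welzl's doubling weights, control the per-round weight increase via the shallow packing lemma applied to the dual sets $D_x=\{S\in\calR : x\in S\}$, and telescope. However, the heart of the theorem --- the per-round claim that some surviving pair crosses weight at most $\lambda_i\,w_{i-1}(\calR)$ with the correct $\lambda_i$ --- is exactly the step you leave unproven. You rightly observe that the packing lemma is an unweighted statement and that after reweighting the dual traces need not be $t$-shallow, but your proposed repair (quarantine the heavy ranges, apply the packing bound only to the light part) is only sketched and, as sketched, does not close the gap: knowing that at most $2t$ heavy ranges can cross a fixed pair bounds their \emph{number}, not their \emph{weight}, and it is the weight of the crossed ranges that must be compared to $w_{i-1}(\calR)$; moreover the separation hypothesis of the packing lemma must itself be measured in weighted symmetric difference, which re-introduces the multiplicities you were trying to remove. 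You yourself flag this as ``the delicate step'', i.e.\ it is not carried out, so the exponent $\frac{1}{1+c_1+c}$ is never actually derived in your argument.

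For comparison, the paper performs no heavy/light split. Its Short Edge Lemma is stated directly for a multiset $T$ of ranges (each range taken with its current weight): the dual sets $D_x$ form a $\delta$-packing, where $\delta$ is the minimum weighted symmetric difference, the shallowness parameter is taken to be $t$ since each point lies in at most $t$ distinct ranges, and the Shallow Packing Lemma yields $n\le C_{d_0}\frac{|T|}{\delta}\psi\pth{\frac{|T|}{\delta},\frac{t}{\delta}}$, which is solved for $\delta$; the doubling recurrence then telescopes using $\sum_i (n-2i)^{-1/(1+c_1+c)}\le n^{1-1/(1+c_1+c)}$. Two quantitative slips in your outline are worth fixing even if you keep your framing: the per-round bound the packing lemma actually gives is $\lambda_i=O\big(\big(t^c g(n)/(m_i n^c)\big)^{1/(1+c_1+c)}\big)$, decaying like $m_i^{-1/(1+c_1+c)}$, not like $1/m_i$ as you claim; and with your $\lambda_i\propto 1/m_i$ the sum $\sum_i\lambda_i$ is a harmonic sum, which would cost an extra $\log n$ factor, so your stated total $O\big((n^{c_1}t^c g(n))^{1/(1+c_1+c)}\big)$ does not follow from your own per-round claim. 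The $K_i$ trace-count computation plays no role in the final bound and can be dropped.
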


\begin{corollary}
\label{cor:low-cross-match-list}
    For the following set systems $(X,\calR)$ with each element of $X$ belonging to at most $t$ members of $\calR$, 
there exist matchings with the following bounds on the crossing number. 
    \begin{enumerate}
        \item Points and half-planes in $\R^2$: $O\pth{\log n+\sqrt{t}}$.
        \item Points and homothets of a convex body in $\R^2$: $O\pth{\log n+\sqrt{t}}$.
        \item Points and disks in $\R^2$: $O\pth{\log n+\sqrt{t}}$.
        \item Points and pseudodisks in $\R^2$: $O\pth{\log n+\sqrt{t}}$.
        \item Points and $\alpha$-fat triangles in $\R^2$: $O\pth{\log n+\sqrt{t\cdot \log^* n}}$.
        \item Points and objects with linear union complexity: $O\pth{\log n+\sqrt{t}}$.
        \item Points and locally $\gamma$-fat semi-algebraic objects in $\R^2$ with bounded 
        description complexity: $O\pth{\log n+\sqrt{t\cdot 2^{O(\log^* n)}}}$.
        \item Points and half-spaces in $\R^3$: $O\pth{\log n+t^{2/3}}$.
        \item Points and orthants in $\R^2$ and $\R^3$: $O\pth{\log n+t^{1/2}}$.
    \end{enumerate}
\end{corollary}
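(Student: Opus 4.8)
}
The plan is to obtain every bound on the list by a direct instantiation of Theorem~\ref{thm:low-cross-match-dual}. In each of the nine cases the ambient family of regions is semi-algebraic of bounded description complexity, hence has bounded VC dimension and bounded \emph{dual} shatter dimension, so the structural hypothesis of Theorem~\ref{thm:low-cross-match-dual} is met; moreover in all nine cases one has $c_1 = 0$, so the theorem supplies a matching of crossing number $O\pth{\pth{t^{c}g(n)}^{1/(1+c)}}$, and it remains only to read off the correct exponent $c$ and weight function $g$ from the shallow cell complexity of each family, and to check that the $g$'s produced are non-decreasing and $o(m^{\e})$ for every $\e\in(0,1)$.

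First I would handle the families of linear union complexity — half-planes, homothets of a fixed convex body, disks, pseudodisks, and the generic "objects with linear union complexity" — together. Using the standard translation from union complexity to shallow cell complexity (if $m$ such regions have union complexity $O\pth{m\,h(m)}$ with $h$ non-decreasing, then the set system has $\psi(m,k)=O\pth{k\,h(m)}$; see~\cite{mustafa2022samplingbook}), linearity of the union complexity means $h\equiv O(1)$, so $c=1$ and $g\equiv O(1)$, and Theorem~\ref{thm:low-cross-match-dual} gives crossing number $O\pth{\sqrt t}$. For points and half-spaces in $\R^3$ I would instead use that an $m$-point subsystem has $O(mk^{2})$ subsets of size at most $k$, i.e.\ $\psi(m,k)=O(k^{2})$, so $c=2$, $g\equiv O(1)$, and the theorem yields $O\pth{t^{2/3}}$. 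For orthants I would invoke that quadrants in $\R^2$ and — crucially — octants in $\R^3$ both have shallow cell complexity $\psi(m,k)=O(k)$ (this linear, rather than quadratic, behaviour is exactly why $\R^3$ orthants beat generic $\R^3$ half-spaces), again giving $c=1$, $g\equiv O(1)$ and crossing number $O\pth{\sqrt t}$.

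For the two fat families I would import the near-linear union-complexity bounds and feed them through the same translation: $\alpha$-fat triangles in $\R^2$ have union complexity $O\pth{m\log^* m}$, hence $\psi(m,k)=O\pth{k\log^* m}$, so $c=1$ and $g(m)=O\pth{\log^* m}$, yielding $O\pth{\sqrt{t\log^* n}}$; and locally $\gamma$-fat semi-algebraic regions of bounded description complexity have union complexity $m\cdot 2^{O(\log^* m)}$, hence $\psi(m,k)=k\cdot 2^{O(\log^* m)}$, so $c=1$ and $g(m)=2^{O(\log^* m)}$, yielding $O\pth{\sqrt{t\cdot 2^{O(\log^* n)}}}$. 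Since $O(1)$, $\log^* m$ and $2^{O(\log^* m)}$ are all non-decreasing and $o(m^{\e})$ for every $\e\in(0,1)$, the hypotheses on $g$ in Theorem~\ref{thm:low-cross-match-dual} hold throughout. The additive $O(\log n)$ appearing in each stated bound is the normalization term of the multiplicative-weights argument behind Theorem~\ref{thm:low-cross-match-dual} (of size $O\pth{\log|\calR|}=O(\log n)$ for these bounded-VC families), retained so that the bound is valid uniformly in $t$; it is dominated by the explicit term as soon as $\pth{t^{c}g(n)}^{1/(1+c)}=\Omega(\log n)$.

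The substitutions themselves are routine; the only points requiring genuine care are (i) correctly quoting the union-complexity bounds for the fat families and converting each into shallow cell complexity with the right factor $g$, and (ii) recording that octants in $\R^3$ — unlike general $\R^3$ half-spaces — have linear shallow cell complexity. I expect (ii), i.e.\ keeping straight which families truly have $c=1$ versus a larger exponent, to be the place where an incorrect constant is most likely to slip in, and I would double-check it against the tables in~\cite{mustafa2022samplingbook}.
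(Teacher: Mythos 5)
Your proposal is correct and takes essentially the same route as the paper: it instantiates Theorem~\ref{thm:low-cross-match-dual} (with $c_1=0$) using the known shallow cell complexity of each family, obtained either directly or via the standard translation from union complexity, exactly as the paper does with its table of bounds. Your value $\psi(m,k)=O(k^2)$ for half-spaces in $\R^3$ is the one consistent with the stated $t^{2/3}$ bound (the paper's table lists $mk$ there, evidently a typo for $mk^2$), and your reading of the additive $O(\log n)$ as the $\log|\calR|$ normalization term from the multiplicative-weights proof matches the paper's argument.
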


The proofs of our results are in Section~\ref{sec:main}.
While our discrepancy and crossing number bounds are not difficult to obtain, we believe they are important, as 
they demonstrate cases of non-random (and non-smoothed) natural set systems which verify or improve upon the Beck-Fiala conjecture. 
Further, these bounds are much stronger than the bounds for general set systems of bounded VC dimension or 
shallow cell complexity (without the degree bound), which are known to be tight~\cite{matousek_1997}. 
It should be noted that merely using low shallow complexity cannot improve the bound on the crossing number, 
as the tightness results in~\cite{matousek_1997} hold for systems in $2$ and $3$ dimensions. 
Matchings (or spanning paths or trees) with low crossing number are of wide interest, as they are used in a number of 
applications~\cite{Matousekdiscbook09,DBLP:conf/compgeom/CsikosM21}.

\section{Preliminaries}
\label{sec:prelims}

The \emph{projection} of a set system $(X,\calR)$ on to a subset $Y\subset X$ of the ground set is denoted by     
    $\calR_{\mid Y} := \left\{ R \cap Y \; \mid \; R \in \calR \right\}$. 

A \emph{primal} set system has a ground set $P$ of points in $\R^d$ and subsets given by all possible intersections 
with a class $\calG$ of geometric objects in $\R^d$, that is the set system $(P,\calG_{\mid P})$. A \emph{dual} set system given 
a (finite) collection $\calG$ of geometric objects is given by $(\calG,\calG_{\mid \R^d})$, that is, equivalence classes of points of $\R^d$ under 
intersection with objects in $\calG$. \\

The notion of \emph{shallow cell complexity} has found many applications
in Computational Geometry, including improved bounds on $\e$-nets and related structures (see e.g.~\cite{MV17}).
        \begin{definition}[Shallow-cell Complexity]
        \label{def:shall-cell-complx}
         A set system $(X, \calR)$ has \emph{shallow-cell complexity}
         $\psi(\cdot, \cdot)$ if for any $Y \subseteq X$, 
         the number of subsets in $\calR|_Y$ of size $l$
         is at most $ |Y| \cdot \psi(|Y|, l)$.
        \end{definition}

In this paper, we shall throughout work with set systems whose shallow cell complexity 
is given by 
\begin{eqnarray}
   \psi(l,k) &=& l^{c_1}g(l)k^c, \label{eqn:scc-def-assump} 
\end{eqnarray}
where $c_1,c>0$ are constants independent of $l,k$, and $g(l) = o(l^{\e})$ for every $\e\in (0,1]$.

        \begin{definition}[Union Complexity]
        \label{def:un-complx}
         A set of $m$ geometric objects $\calO$ has \emph{union complexity}
         $f(m)$ if the number of faces of all dimensions in the union of the members 
         of $\calO$, is at most $f(m)$.
        \end{definition}
It is known (see e.g.~\cite{MV17}) that set systems formed by intersections of points with objects having union complexity
$f(.)$, have shallow cell complexity at most $f(m/k)\cdot k^2$. \\

A central lemma for set systems of bounded shallow cell complexity, is the Shallow Packing Lemma 
of~\cite{DuttaEG15-shallow-packings,M-shallow-packing-2016}. To state the lemma, first we need 
to define \emph{shallow packings}.
\begin{definition}[Shallow Packing]
    Let $(X, \calR)$ be a set system, and $\delta$ and $k$ be positive integers. 
        A subset of ranges $\calP \subseteq \calR$ is a \emph{$k$-shallow $\delta$-packing} or 
\emph{$(k,\delta)$-packing}, if any pair of ranges in $\calP$ have symmetric difference 
        greater than $\delta$, and for all $R \in \calP$ we have $|R\cap X| \leq k$.
\end{definition}


Now follows the Shallow Packing Lemma of Mustafa~\cite{M-shallow-packing-2016}, which gives optimal bounds for 
shallow packings of set systems, in terms of their shallow cell complexity.
\begin{theorem}[Shallow Packing Lemma, Mustafa~\cite{M-shallow-packing-2016}]
\label{thm:shallow-packing-lemma}
    Let $(X, \calR)$ be a set system with $|X| = n$ and shallow cell complexity  $\varphi_{\calR}$. If 
the VC dimension of $(X, \calR)$ is at most $d_{0}$, and $(X, \calR)$ is a $k$-shallow $\delta$-packing then 
    \[ 
        \frac{24d_{0}n}{\delta} \cdot \varphi_{\calR}\left( \frac{4d_{0}n}{\delta}, \frac{12d_{0}k}{\delta}\right) \leq C_{d_0}\frac{n}{\delta}\psi\pth{\frac{n}{\delta},\frac{k}{\delta}},
    \]
where $C_{d_0}$ is a constant independent of $n$ and $\delta$.
\end{theorem}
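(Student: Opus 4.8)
The plan is to sub-sample the ground set, count the distinct restrictions of the packing to the sample using the shallow cell complexity $\psi$, and transfer that count back to a bound on the packing size via Haussler's one-inclusion-graph technique; the chaining idea of Dutta, Ezra and Ghosh is what lets $\psi$ replace the generic Sauer--Shelah estimate. Fix a $k$-shallow $\delta$-packing $\calP\subseteq\calR$; the goal is $|\calP|=O_{d_0}\bigl(\tfrac n\delta\,\psi(\tfrac n\delta,\tfrac k\delta)\bigr)$, which is the claimed bound up to constants. A first observation fixes the regime: two members of $\calP$ have symmetric difference exceeding $\delta$ but each has size at most $k$, so $\delta<2k$ unless $|\calP|\le1$; hence $k/\delta=\Omega(1)$ and the shallowness parameter $\ell:=\Theta_{d_0}(k/\delta)$ used below is at least a constant. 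Next I would pick a uniformly random maximal chain $\emptyset=A_0\subsetneq A_1\subsetneq\cdots\subsetneq A_n=X$ (equivalently a random ordering of $X$), so each $A_j$ is a uniform random $j$-subset, and for $R\in\calP$ let $\tau(R)$ be the first step $j$ at which $R|_{A_j}$ is \emph{resolved}, i.e.\ is not equal to the restriction of any other member of $\calP$ that was still tied with it; since all members of $\calP$ are distinct, $\tau(R)\le n$, so $|\calP|=\sum_j\#\{R:\tau(R)=j\}$.

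Two estimates then do the work. For the first, any $R$ resolved at step $j$ was, at step $j-1$, in a group of size $\ge2$ sharing the restriction $R|_{A_{j-1}}$, and adding the $j$-th point splits that group; each group that splits creates at most two singleton groups while contributing exactly $1$ to $|\calP|_{A_j}|-|\calP|_{A_{j-1}}|$, and every such $R$ lands in one of these singletons, so $\#\{R:\tau(R)=j\}\le2\bigl(|\calP|_{A_j}|-|\calP|_{A_{j-1}}|\bigr)$, which telescopes to $\sum_{j\le J}\#\{R:\tau(R)=j\}\le 2|\calP|_{A_J}|$ for any cutoff $J$. Now take $J=\Theta_{d_0}(n/\delta)$. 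For $R\in\calP$, $|R\cap A_J|$ is hypergeometric with mean $O_{d_0}(k/\delta)$, so a Chernoff bound makes all but an exponentially small fraction of the restrictions in $\calP|_{A_J}$ of size at most $\ell=\Theta_{d_0}(k/\delta)$; by the definition of shallow cell complexity the number of distinct restrictions of bounded size is at most $|A_J|\cdot\psi(|A_J|,\ell)$, and since $\psi$ grows only polynomially in its second argument the few non-shallow restrictions are absorbed by taking the sampling constant large, giving $\Ex{|\calP|_{A_J}|}=O_{d_0}\bigl(\tfrac n\delta\,\psi(\tfrac n\delta,\tfrac k\delta)\bigr)$. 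For the second estimate, because $\calP$ is $\delta$-separated a pair of restrictions agrees on a $J$-subset with probability at most $(1-\delta/n)^{J}$; feeding this through the one-inclusion-graph accounting --- the Haussler--Littlestone--Warmuth bound that a system of VC dimension $\le d_0$ has at most $d_0$ times as many one-inclusion edges as vertices --- rather than a crude union bound over all competitors of a set, shows that already for $J=\Theta_{d_0}(n/\delta)$ we have $\Ex{\#\{R:\tau(R)>J\}}\le\tfrac12|\calP|$.

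Combining, $|\calP|=\sum_j\#\{R:\tau(R)=j\}\le 2\,\Ex{|\calP|_{A_J}|}+\Ex{\#\{R:\tau(R)>J\}}$, hence $|\calP|\le4\,\Ex{|\calP|_{A_J}|}=O_{d_0}\bigl(\tfrac n\delta\,\psi(\tfrac n\delta,\tfrac k\delta)\bigr)$ and the bound follows; the explicit left-hand side $\tfrac{24d_0n}{\delta}\varphi_\calR(\tfrac{4d_0n}{\delta},\tfrac{12d_0k}{\delta})$ in the statement is just this count of shallow distinct restrictions with the constants $4,12,24$ made concrete. I expect the genuinely delicate step to be the second estimate: a naive union bound over the (up to) $|\calP|$ competitors of a set $R$ would only guarantee resolution by step $\Theta_{d_0}\bigl((n/\delta)\log(n/\delta)\bigr)$, which would both inflate the arguments of $\psi$ and cost an extra factor of order $n/\delta$ (since $|\calP|_{A_J}|$ scales with $|A_J|=J$), wrecking the clean bound; removing that logarithm requires Haussler's one-inclusion-graph refinement so that ``resolution'' becomes a local event, and it is the interaction of this refinement with the shallow-cell-complexity count --- not either ingredient alone --- that yields $\tfrac n\delta\,\psi(\tfrac n\delta,\tfrac k\delta)$. (In particular, merely applying the classical packing lemma and then invoking $\psi$ at the base of its recursion overshoots, in general, by a $\mathrm{poly}(\delta)$ factor.)
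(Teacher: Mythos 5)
You should first note that the paper does not prove this statement at all: it is quoted from Mustafa's paper (and, as displayed, the inequality is really a comparison of two complexity expressions; the content you correctly set out to establish is the packing bound $|\calP|\le C_{d_0}\frac{n}{\delta}\psi\pth{\frac{n}{\delta},\frac{k}{\delta}}$). Measured against the literature proof, your skeleton has the right ingredients and is essentially the Haussler--Mustafa route: sample at scale $n/\delta$, count shallow traces via $\psi$, use Chernoff for shallowness (your observation $k/\delta\ge 1/2$ is the right reason the mean is not degenerate), and control collisions through the one-inclusion-graph density bound. Your ``resolution time along a random chain'' bookkeeping, with the splitting estimate $\#\{R:\tau(R)=j\}\le 2\pth{|\calP|_{A_j}|-|\calP|_{A_{j-1}}|}$, is a correct repackaging of the usual single-sample accounting, since $\#\{R:\tau(R)>J\}$ is exactly the number of sets whose trace on $A_J$ is non-unique.

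The genuine gap is your second estimate: the claim $\E\,[\#\{R:\tau(R)>J\}]\le|\calP|/2$ at $J=\Theta_{d_0}(n/\delta)$ \emph{is} Haussler's packing lemma, i.e.\ the entire difficulty of the theorem, and you assert it rather than prove it. The justification you sketch --- pairwise agreement probability $(1-\delta/n)^J$ ``fed through'' the edge-density bound --- does not assemble into an argument: summing the pairwise bound over competitors is exactly the union bound you reject, while the bound ``edges $\le d_0\cdot$vertices'' lives on the one-inclusion graph of the \emph{traces}, and connecting ``$R$ is unresolved on $A_J$'' to edges requires Haussler's conditioning device (sample $J+1$ points, delete a uniformly random one, show each unresolved set yields, with probability at least about $\delta/n$ over the deleted point, an edge in that coordinate, and control the multiplicity of sets sharing a trace so edges are not over-credited). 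Without carrying this out, the proof is missing its core. A second, smaller slip: you cannot conclude $\E\,[|\calP|_{A_J}|]=O_{d_0}\pth{\frac{n}{\delta}\psi\pth{\frac{n}{\delta},\frac{k}{\delta}}}$ outright, because the non-shallow traces are not controlled by $\psi$; they must be charged to $|\calP|$ itself (each comes from a set with $|R\cap A_J|>\ell$, an event of probability only a small \emph{constant}, not exponentially small, since $k/\delta$ may be $\Theta(1)$), giving $|\calP|\le 4J\psi(J,\ell)+4p|\calP|$ with $p\le 1/8$ and then solving for $|\calP|$; as written, your chain $|\calP|\le 4\,\E\,[|\calP|_{A_J}|]=O(\cdot)$ is circular. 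Finally, the chaining of Dutta--Ezra--Ghosh plays no role here: Mustafa's contribution was precisely that the shallow-cell count can be inserted directly into Haussler's argument without chaining.
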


Finally, we define the notions of \emph{crossing number} and partitions with low crossing number, which are central to this paper.
\begin{definition}[Crossing Number; Partitions with Low Crossing Number]
   A range $S \in \calR$ is \emph{crossed} by a pair of elements of $X$, $\{x,y\}$ if $|\{x,y\}\cap X|=1$. Given a partition of the universe $X$ into 
pairs $\Pi=(\{x_i,y_i\})_{i=1}^{n/2}$ with $\{x_i,y_i\}\cap \{x_j,y_j\} = \emptyset$ for $i\neq j$, and $\bigcup_{i\geq 1}\{x_i,y_i\} = X$, the \emph{crossing 
number} of $\Pi$ is the maximum number of pairs of $\Pi$ which cross any given range of $\calR$. 
\end{definition}

\section{Well-Behaved Dual Systems}
\label{sec:main}

In this section, we prove our theorems and corollaries. We begin with the proof of Theorem~\ref{thm:discr-bd-crossn-degt}. The proof follows immediately from the 
statements of Theorem~\ref{thm:low-cross-match-dual} and Corollary~\ref{cor:low-cross-match-list}, with the following additional lemma,
which gives a bound on the discrepancy of an arbitrary set system, in terms of the \emph{crossing number} of a perfect matching of the elements.
The first statement in the lemma is an easy observation, while the second statement is well-known (see e.g. Chapter 5~\cite{Matousekdiscbook09}). 
\begin{lemma}
\label{l:discr-bd-crossn-degt}
    Let $(X,\calR)$ be a set system, $|X|$ even, with degree bounded by $t$ and having a matching with crossing number at most $N$. Then the discrepancy of $(X,\calR)$ is at most 
    \[O
        \left(\min\left\{N,\sqrt{N\cdot\log |\calR|}\right\}\right) = O
        \left(\sqrt{N\cdot\log |\calR|}\right).
    \] 
\end{lemma}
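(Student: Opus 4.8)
The plan is to prove the two bounds $O(N)$ and $O(\sqrt{N \log|\calR|})$ separately and then observe that the minimum of the two is always $O(\sqrt{N\log|\calR|})$ (since if $N \le \log|\calR|$ then $N \le \sqrt{N\log|\calR|}$, and if $N > \log|\calR|$ then the first bound $N$ is at least $\sqrt{N\log|\calR|}$ anyway — so in fact the minimum is never worse than $\sqrt{N\log|\calR|}$, which is why the equality in the statement holds). So it suffices to establish both bounds.

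For the first bound, $\disc(\calR) = O(N)$: let $\Pi = (\{x_i, y_i\})_{i=1}^{n/2}$ be the promised matching with crossing number at most $N$. Define a coloring $\chi$ by setting $\chi(x_i) = +1$ and $\chi(y_i) = -1$ for every pair (an arbitrary choice of orientation within each pair). Then for any range $S \in \calR$, the sum $\sum_{p \in S}\chi(p)$ receives a contribution of $0$ from every matched pair both of whose endpoints lie in $S$ or both outside $S$, and a contribution of $\pm 1$ from each pair that crosses $S$. Since at most $N$ pairs cross $S$, we get $|\sum_{p\in S}\chi(p)| \le N$. This gives $\disc(\calR) \le N$ with no dependence on $t$ at all; the degree bound is not needed here (though it is of course what makes $N$ small via Theorem~\ref{thm:low-cross-match-dual}).

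For the second bound, $\disc(\calR) = O(\sqrt{N\log|\calR|})$, the idea is the standard one (cf. Chapter 5 of~\cite{Matousekdiscbook09}): instead of fixing the orientation of each matched pair deterministically, choose it \emph{randomly}. That is, for each $i$ independently let $\sigma_i \in \{\pm1\}$ be uniform, and set $\chi(x_i) = \sigma_i$, $\chi(y_i) = -\sigma_i$. For a fixed range $S$, only the crossing pairs contribute, and each contributes an independent $\pm1$ with mean zero; if $S$ is crossed by $m_S \le N$ pairs, then $\sum_{p\in S}\chi(p)$ is a sum of $m_S$ i.i.d. Rademacher variables. By Hoeffding's / Chernoff's inequality, $\Prob{|\sum_{p\in S}\chi(p)| > \lambda\sqrt{N}} \le 2\exp(-\lambda^2/2)$. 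Taking $\lambda = \Theta(\sqrt{\log|\calR|})$ and a union bound over the at most $|\calR|$ ranges shows that with positive probability the imbalance of every range is $O(\sqrt{N\log|\calR|})$, hence such a coloring exists.

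The only genuinely non-routine point — and it is minor — is the algebraic verification that $\min\{N, \sqrt{N\log|\calR|}\} = O(\sqrt{N\log|\calR|})$ as claimed in the displayed equation; this is immediate by the case analysis above. Everything else is bookkeeping: the first part is a one-line counting argument from the definition of crossing number, and the second is a textbook random-signing-plus-union-bound argument. I do not anticipate any real obstacle; the content of the lemma is entirely in having the low-crossing-number matching available, which Theorem~\ref{thm:low-cross-match-dual} and Corollary~\ref{cor:low-cross-match-list} supply.
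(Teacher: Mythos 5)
Your proposal is correct and follows essentially the same route as the paper's own proof: the deterministic opposite-coloring of matched pairs for the $O(N)$ bound, and the random-orientation-of-pairs argument with a Chernoff/Hoeffding tail bound and a union bound over $\calR$ for the $O\pth{\sqrt{N\log|\calR|}}$ bound. No gaps; the extra case analysis justifying the displayed equality is harmless (and in fact trivial, since the minimum is never larger than $\sqrt{N\log|\calR|}$).
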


\begin{proof}
   It is easy to see that the discrepancy is at most $N$: just take an arbitrary coloring where each vertex of matched pair of vertices, is given opposite colors. The discrepancy 
of a set $S\in \calR$ is at most the number of pairs that have exactly one vertex in $S$. This is at most the crossing number, i.e. $N$.

   To see that the discrepancy is at most $O\pth{\sqrt{N\log |\calR|}}$, consider a random coloring which ensure that in every matched pair, opposite colors are assigned to the vertices 
in the pair. Now for any set $S\in \calR$, the expected discrepancy of $S$ is zero, and the variance of the discrepancy of $S$ is at most the number of pairs which cross $S$, i.e. at most the
crossing number $N$. Now applying Chernoff bounds, allowing for a maximum deviation of $O\pth{\sqrt{N\log |\calR|}}$, and taking a union bound over all sets of $\calR$, we 
get the statement of the lemma.
\end{proof}

For the proof of Theorem~\ref{thm:low-cross-match-dual}, we shall need the following lemma, which guarantees the existence of a pair 
of points which cross a small number of ranges. 
\begin{lemma}[Short Edge Lemma]
\label{l:short-edge-uc}
    Assume the setting of Theorem~\ref{thm:low-cross-match-dual}. Then for any set or multiset of ranges $T \subseteq \calR$, 
there exist points $x,y \in X$, such that the edge $\{x,y\}$ is crossed by at most $\delta$ ranges, where $\delta$ is the 
maximum value satisfying  
       \begin{eqnarray*}
            n &\leq&  C_{d_{0}}\frac{|T|}{\delta}\cdot\psi\pth{\frac{|T|}{\delta},\frac{t}{\delta}}, 
       \end{eqnarray*}
where $C_{d_{0}}$ is a constant that depends only on $d_{0}$.
\end{lemma}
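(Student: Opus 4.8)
The plan is to use a counting (packing) argument, exactly in the spirit of Welzl's multiplicative weights analysis, but applied to a single weighting — here the (multi)set $T$. First I would consider the complete graph on the $n$ points of $X$, and for each edge $\{x,y\}$ define its \emph{weight} as the number of ranges of $T$ (counted with multiplicity) that it crosses, i.e. $w(\{x,y\}) := \size{\{R \in T : \size{\{x,y\}\cap R} = 1\}}$. The goal is to show that some edge has weight at most $\delta$. Suppose, for contradiction, that \emph{every} edge of the complete graph has weight strictly greater than $\delta$.

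The key step is to extract from $T$ a large shallow packing and then invoke the Shallow Packing Lemma (Theorem~\ref{thm:shallow-packing-lemma}) to bound its size. Concretely, I would pick a maximal subcollection $\calP \subseteq T$ (as a set of distinct ranges) that is pairwise ``far apart'' in symmetric difference; but the cleaner route, and the one that matches the degree hypothesis, is the following. Each range $R \in T$, viewed as a subset $R \cap X$, satisfies $\size{R\cap X}\leq t$ — wait, that is not given directly; what is given is that each \emph{point} lies in at most $t$ ranges. So instead I would work with the \emph{dual}: since each point of $X$ is in at most $t$ ranges of $\calR$ (hence of $T$), the projected system behaves like a system of bounded ``shallowness'' on the dual side. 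The way to exploit this is to observe that for the shallow packing bound we want ranges of bounded size; passing to the dual set system $(\calG, \calG_{\mid \R^d})$ converts ``each point in $\leq t$ ranges'' into ``each dual range has size $\leq t$,'' so the dual shatter dimension $d_0$ and the shallow cell complexity $\psi$ control a $t$-shallow $\delta$-packing. Thus, under the contradiction hypothesis, every pair of (dual) points is separated by more than $\delta$ dual ranges, so the whole ground set of size $n$ is itself a $(t,\delta)$-shallow-packing inside a system of size $\size{T}$; the Shallow Packing Lemma then forces
\[
   n \;\leq\; C_{d_0}\,\frac{\size{T}}{\delta}\cdot \psi\!\pth{\frac{\size{T}}{\delta},\frac{t}{\delta}},
\]
which contradicts the maximality of $\delta$ chosen in the statement. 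Hence some edge $\{x,y\}$ has weight at most $\delta$, which is exactly the claim.

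I expect the main obstacle to be making the ``the whole point set $X$ is itself a $(t,\delta)$-packing'' step fully rigorous — specifically, correctly matching the roles of ground set and ranges between the primal degree-$t$ hypothesis and the hypotheses of Theorem~\ref{thm:shallow-packing-lemma}, and verifying that the symmetric-difference condition translates cleanly into the ``edge crosses more than $\delta$ ranges'' condition (this is where the crossing-number definition is used: an edge $\{x,y\}$ crosses $R$ iff exactly one of $x,y$ is in $R$, i.e. $x,y$ are separated by $R$). A secondary technical point is that $T$ may be a multiset, so I would first collapse it to its underlying set (or carry multiplicities through the packing lemma, noting the lemma's bound is monotone and multiplicity only helps), and I would need the VC dimension of $(X,\calR)$ — and hence of every sub(multi)set — to be bounded by some $d_0$, which follows from the bounded dual shatter dimension assumption in Theorem~\ref{thm:low-cross-match-dual}. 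Once these identifications are pinned down, the inequality defining $\delta$ is literally the contrapositive of the packing bound, so the proof closes immediately.
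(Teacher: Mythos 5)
Your proposal is correct and follows essentially the same route as the paper: dualize each point $x$ to the set $D_x$ of ranges of $T$ containing it (so the degree bound makes every $D_x$ of size at most $t$), observe that $|D_x\,\Delta\, D_y|$ is exactly the number of ranges crossed by $\{x,y\}$, and apply the Shallow Packing Lemma to the resulting $t$-shallow packing of $n$ sets over the ground (multi)set $T$. The only cosmetic difference is your contradiction framing, where you should run the packing bound at a value strictly larger than $\delta$ (e.g.\ the actual minimum symmetric difference, as the paper does directly) to contradict the maximality of $\delta$; this is a trivial adjustment.
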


\begin{proof}
   For each point $x \in X$, define the set of ranges $D_x := \{S\in T\;:\; x\in S\}$.
Let $\calD := \{D_x\;:\; x\in X\}$ be the collection of sets of ranges so formed. 
Then for any pair of points $x,y \in X$, the symmetric difference $D_x\Delta D_y$ is the set of 
ranges which are crossed by the pair $\{x,y\}$.
The shallow cell complexity of the system $\calD$ is at most the union complexity of the dual 
set system given by $T$. Let 
$\delta$ denote the minimum symmetric difference between any pair of sets in $\calD$.
If there exists a pair $\{x,y\}$ such that $D_x = D_y$, then we are done, since $\delta=0$ for this 
pair. Otherwise, observe that by definition, $\calD$ 
is a $\delta$-packing. Since every element of $(X,\calR)$ belongs to at most $t$ ranges, the 
maximum size of any set in $\calD$ is $t$. Thus, we can apply the Shallow Packing Lemma~\ref{thm:shallow-packing-lemma} 
to bound the size of the $\delta$-packing. We get  
       \begin{eqnarray*}
            n &=& |\calD| \;\;\leq\;\; C_{d_0}\frac{|T|}{\delta}\cdot\psi\pth{\frac{|T|}{\delta},\frac{t}{\delta}}. 
       \end{eqnarray*}  
where $C_{d_{0}}$ is a constant that depends only on $d_{0}$.
\end{proof}

We will now give the proof of Theorem~\ref{thm:low-cross-match-dual}.
\begin{theorem}
[Restatement of Theorem~\ref{thm:low-cross-match-dual}]
\label{thm:low-cross-match-dual-restatement}
Let $(X,\calR)$ be an $n$-point set system, with $|\calR|\geq n$, generated by intersections of $X$ with a class of objects having dual shatter 
dimension at most $d$ and shallow cell complexity $\psi(m,k) = m^{c_1}\cdot g(m)k^{c}$, where
$c_1,c \geq 0$ are independent of $m$ and $k$, and $g(m)$ is a non-decreasing function of $m$ such that $g(m) = o(m^{\e})$ for every $\e \in (0,1)$. 
Further, let $(X,\calR)$ have the property that each point of $X$ belongs to at most $t$ sets.
Then there exists a matching of the points of $X$, with crossing number at most 
\[ O\pth{\pth{n^{c_1}t^{c} g(n)}^{\frac{1}{1+c_1+c}}}.\]
In particular, if $c_1=0$, then the crossing number 
is at most 
\[ O\pth{\pth{t^{c} g(n)}^{\frac{1}{1+c}}}.\]
\end{theorem}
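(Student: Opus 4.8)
The plan is to obtain the matching by the classical Welzl-style multiplicative weights (reweighting) argument, using the Short Edge Lemma~\ref{l:short-edge-uc} as the key primitive that guarantees, for any weighting of the ranges, a pair of points crossing few (weighted) ranges. First I would set up the iterative construction: maintain weights $w(S)\geq 1$ on the ranges $S\in\calR$, initially all $1$. At each step, work with the current set $X'$ of unmatched points, apply the Short Edge Lemma to the multiset $T$ in which each range $S$ appears $w(S)$ times (so $|T|=\sum_S w(S)$), obtain a pair $\{x,y\}$ crossing at most $\delta$ weight, add $\{x,y\}$ to the matching, remove $x,y$ from $X'$, and double the weight of every range crossed by $\{x,y\}$. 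Run this for $n/2$ steps until $X'$ is empty. The standard potential argument on $W:=\sum_S w(S)$ shows that after $n/2$ steps, on the one hand $W \leq |\calR|\cdot 2^{(\text{max number of times any single range is crossed})}$, and on the other hand $W$ grows by a factor $(1+\delta_i/W_i)$ at most each step, so $W \leq |\calR|\cdot e^{\sum_i \delta_i/W_i}$; choosing the parameters so that each $\delta_i/W_i$ is bounded by $O(\kappa/n)$ for the target crossing number $\kappa$ yields that no range is crossed more than $O(\kappa + \log|\calR|)$ times. Here I'd note the mild subtlety that the Short Edge Lemma is applied to a multiset $T$, but its proof only uses that the dual packing structure and the degree bound $t$ are unaffected by multiplicity, so it goes through verbatim.

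The second ingredient is to solve for $\delta$ in the inequality $n \leq C_{d_0}\frac{|T|}{\delta}\psi\!\left(\frac{|T|}{\delta},\frac{t}{\delta}\right)$ with $\psi(m,k)=m^{c_1}g(m)k^{c}$. Substituting and writing $m=|T|/\delta$, the inequality becomes $n \leq C_{d_0}\, m^{1+c_1} g(m)\, (t/\delta)^{c} = C_{d_0}\,(|T|/\delta)^{1+c_1} g(|T|/\delta)\, t^{c}\delta^{-c}$, i.e. roughly $\delta^{1+c_1+c} \leq C_{d_0}\,|T|^{1+c_1} t^{c} g(|T|/\delta)/n$. Since at all times $|T|\leq W \leq |\calR|\cdot 2^{\kappa}$ where $\kappa$ is the (to-be-bounded) maximum crossing number, and since in the regime of interest $W = \Theta(n)$ up to polylog factors, I would plug $|T| = O(W)$, use $g(W/\delta)\leq g(W) = o(W^{\e})$ for the near-linear case, and conclude $\delta \leq O\!\left((W^{c_1} t^{c} g(W)/n)^{1/(1+c_1+c)}\right)$. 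With $W = \Theta(n \cdot 2^{\kappa}/?)$, one gets $\delta_i/W_i \lesssim \frac{1}{n}\cdot O\!\left((n^{c_1} t^{c} g(n))^{1/(1+c_1+c)}\right)$ (absorbing subpolynomial slack from $g$ and from the $2^\kappa$ factor into the $o(\cdot)$), which is exactly $\kappa/n$ for $\kappa = O\!\left((n^{c_1} t^{c} g(n))^{1/(1+c_1+c)}\right)$. Feeding this back through the potential argument closes the loop and gives the stated bound; when $c_1=0$ the $n^{c_1}$ disappears and one recovers $O\!\left((t^{c}g(n))^{1/(1+c)}\right)$.

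The main obstacle I anticipate is making the self-referential estimate rigorous: $\delta$ at each step depends on the current total weight $|T|=W_i$, which in turn depends on how many times ranges have been crossed so far, i.e. on the very quantity $\kappa$ we are trying to bound. The clean way to handle this is an a-priori / bootstrapping argument: fix the target value $\kappa^\star = C(n^{c_1}t^{c}g(n))^{1/(1+c_1+c)}$ for a suitable constant $C$, assume inductively that through step $i$ no range has been crossed more than $\kappa^\star$ times (so $W_i \leq |\calR| 2^{\kappa^\star}$, and using $|\calR|\leq n^{O(1)}$ and the subpolynomial nature of $g$, this only inflates the bound on $\delta_i$ by a constant factor), derive $\delta_i/W_i \leq \kappa^\star/n$, and then verify via the potential/weight-doubling bound that the invariant is maintained at step $i+1$. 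The subpolynomial growth condition $g(m)=o(m^\e)$ for every $\e\in(0,1)$ is precisely what lets us absorb the $g(W_i/\delta_i)$ versus $g(n)$ discrepancy and the $2^{\kappa^\star}$ factor into the constant, since $\kappa^\star$ is itself subpolynomial in $n$ (using $t = n^{O(1)}$, which holds as $t\leq|\calR|$) — so $2^{\kappa^\star}$ and hence $W_i$ stay $n^{1+o(1)}$, and $(W_i)^{c_1/(1+c_1+c)} = n^{c_1/(1+c_1+c)+o(1)}$. I would present this bootstrapping as the heart of the proof and relegate the weight-potential inequalities, which are entirely standard, to a brief paragraph. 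Finally, the dual shatter dimension bound $d$ enters only to guarantee the VC dimension $d_0 = O(d)$ is finite so that the Shallow Packing Lemma applies inside the Short Edge Lemma; no further use of it is needed.
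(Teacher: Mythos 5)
Your overall route is the same as the paper's: a greedy construction of the matching via Welzl's multiplicative weights, with the Short Edge Lemma applied to the weighted multiset of ranges supplying a cheap pair at each step, and a weight-doubling potential argument converting the per-step bounds into a crossing-number bound. However, one step of your aggregation fails as stated, namely the uniform claim that each $\delta_i/W_i$ is $O(\kappa^\star/n)$. The Short Edge Lemma has to be applied to the subsystem induced on the \emph{unmatched} points (a pair found among already-matched points is useless), and their number at step $i$ is $n-2i$, so what the lemma actually gives is $\delta_i/W_i \leq \pth{t^{c}\,g(W_i)/\pth{(n-2i)\,W_i^{c}}}^{1/(1+c_1+c)}$. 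In the late rounds, where $n-2i=O(1)$, this is of order $\pth{t^{c}g(n)/n^{c}}^{1/(1+c_1+c)}$, which exceeds $\kappa^\star/n=\pth{t^{c}g(n)/n^{1+c}}^{1/(1+c_1+c)}$ by a factor of roughly $n^{1/(1+c_1+c)}$ --- polynomially large --- so the per-step invariant on which your bootstrap rests is simply false near the end of the process. The repair is exactly the paper's computation: retain the $(n-2i)$ dependence and sum, using $\sum_{i=1}^{n/2-1}(n-2i)^{-1/(1+c_1+c)} = O\pth{n^{1-1/(1+c_1+c)}}$, which gives $\log w_{n/2}(\calR) \leq \log|\calR| + O\pth{\pth{n^{c_1}t^{c}g(n)}^{1/(1+c_1+c)}}$ and hence the claimed crossing number (with the additive $O(\log n)$ term, which is also visible in the paper's corollary).

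Your second deviation --- the bootstrapping on an a priori upper bound $W_i \leq |\calR|2^{\kappa^\star}$ --- is unnecessary. The per-step ratio depends on the total weight only through $g(W_i)/W_i^{c}$, which for subpolynomial $g$ and constant $c>0$ is largest when $W_i$ is smallest; since weights never decrease and $W_0=|\calR|\geq n$, one simply bounds it by $g(n)/n^{c}$ (this is the paper's move; strictly it assumes $g(m)/m^{c}$ is eventually non-increasing, the same mild subpolynomial slack you flag for your own version). So the self-referential estimate you describe as the heart of the proof never actually arises: only a trivial \emph{lower} bound on $W_i$ is needed, not an upper bound. Your bootstrap could be made to close (using $t\leq|\calR|=n^{O(1)}$), but only after the per-step bound is corrected as above, at which point it is redundant.
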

\begin{proof}
   We shall construct the matching by picking pairs of points sequentially, in a greedy manner. Each time, we shall
   choose a pair of points that crosses the least number of ranges. However, instead of looking at the number of ranges crossed
   by a given pair, we shall do a weighted counting, in which we shall assign a weight to each range and look to choose the pair
   that minimizes the \emph{weight} of the ranges that it crosses. The weight of a range $S\in \calR$ will be a function of the number of pairs
   already selected, which cross $S$. This will force the selection of edges which do not cross the ranges that have already been 
   crossed several times, thus our strategy will tend to favour low crossing number with respect to the already selected edges. \\

   Initially, we assign each range a weight of $1$. After the $i$-th pair has been chosen, for a given range $S\in \calR$ let $c_S(i)$ denote the number 
   of already selected pairs, which cross $i$. Then the weight of $S$ is given by $w_i(S) = 2^{c_S(i)}$. The weight of any subset of ranges $\calS\subseteq \calR$ is the sum 
   of the weights of the ranges in $\calS$. 
   At each step, we greedily pick a pair that crosses the least number of weighted ranges of $\calR$, where the weights have been updated at the end 
   of the previous step. At the end of the $i$-th step, let $T_i$ denote the multiset obtained $\calR$ by replacing each range $S\in \calR$ 
   with $w_i(S)$ copies of $S$. Note that $w_i(\calR) = |T_i|$, where the cardinality of $T_i$ refers to the weighted cardinality.

   By the Short Edge Lemma~\ref{l:short-edge-uc}, we have 
\begin{eqnarray*}
   n &\leq& \frac{|T_i|^{1+c_1}}{\delta^{1+c_1+c}}g(|T_i|/\delta)t^{c}\\
   &\leq&    \frac{w_i(\calR)^{1+c_1}}{\delta^{1+c_1+c}}g(w_i(\calR))t^{c}.
\end{eqnarray*}

Therefore, there exists a pair of points in 
\[
    X\setminus \{x_1,y_1,x_2,y_2,\ldots,x_i,y_i\},
\]
which cross at most $\delta$ ranges, 
where $\delta$ is the maximum value satisfying 
\[ \delta \leq \pth{\frac{|T_i|^{1+c_1}}{n}g(|T|)t^{c}}^{1/(1+c_1+c)}.\]
We choose this pair $\{x_{i+1},y_{i+1}\}$ to be the $(i+1)$-st edge of the matching.
   Let $\calR_{i+1}$ denote the ranges of $\calR$ that are crossed by the pair $\{x_{i+1},y_{i+1}\}$.
The weight of the system $\calR$ after the $(i+1)$-st step, will therefore be 
\begin{eqnarray*}
   w_{i+1}(\calR) &=& w_i(\calR) - w_i(\calR_{i+1}) + w_{i+1}(\calR_{i+1}).
\end{eqnarray*}
Now since the crossing number of the ranges in $\calR_{}$ increased by $1$ after choosing $\{x_{},y_{}\}$, therefore by the 
definition of the weight function, their weights double after the $(i+1)$-st step, that is, 
\[
    w_{i+1}(\calR_{i+1})=2w_i(\calR_{i+1}).
\]
Thus we get 
\begin{eqnarray}
   w_{i+1}(\calR) &=& w_i(\calR) - w_i(\calR_{i+1}) + 2w_{i}(\calR_{i+1})  \notag \\
                  &=& w_i(\calR)\pth{1+\frac{w_{i}(\calR_{i+1})}{w_i(\calR)}}. \label{eqn:rec-rel-wi}
\end{eqnarray}
Noting that $|\calR_{i+1}| \leq \delta$ by our choice of the pair $\{x_{i+1},y_{i+1}\}$, we get 
\begin{eqnarray*}
    w_i(\calR_{i+1}) &\leq& C\cdot \pth{w_i(\calR)^{1+c_1}t^{c}\cdot\pth{\frac{g(w_i(\calR))}{n-2i}}}^{1/(1+c_1+c)}.
\end{eqnarray*}
Now substituting in the recurrence relation~\eqref{eqn:rec-rel-wi} for the weight of $\calR$, gives 
\begin{eqnarray*}
   w_{i+1}(\calR) &\leq& w_i(\calR)\pth{1+\frac{w_i(\calR_{i+1})}{w_i(\calR)}} \\
                  &=& w_i(\calR)\pth{1+\pth{\frac{t^{c}}{(n-2i)}\cdot\frac{g(w_i(\calR))}{w_i(\calR)^{c}}}^{\frac{1}{1+c_1+c}}}. 
\end{eqnarray*}
Since by definition $g(n) = o(n^{\e})$ for every $\e\in (0,1]$ and we assumed $c$ is constant, therefore the ratio $\frac{g(w_i(\calR))}{w_i(\calR)^{c}}$ is 
maximized when $w_i(\calR)$ is the minimum, i.e. $w_0(\calR)$, or $|\calR|$. Further, by our assumption that $|\calR|\geq n$, the ratio becomes at most 
$\frac{g(n)}{n^{c}}$. Thus we get 
\begin{eqnarray*}
  w_{n/2}(\calR) &\leq& w_0(\calR)\prod_{i=1}^{n/2-1}\pth{1+\pth{\frac{t^{c}g(n))}{(n-2i)n^{c}}}^{\frac{1}{1+c_1+c}}} \\
                 &\leq& |\calR|\prod_{i=1}^{n/2-1}\pth{1+\pth{\frac{t^{c}g(n)}{(n-2i)n^{c}}}^{\frac{1}{1+c_1+c}}} .
\end{eqnarray*}

Taking logarithms with respect to both sides, we get 
\begin{eqnarray}
   \log w_{n/2}(\calR) &\leq&  \log|\calR| + O\pth{\pth{\pth{\frac{t}{n}}^{c}g(n)}^{\frac{1}{1+c_1+c}} \sum_{i=1}^{n/2-1}\frac{1}{(n-2i)^{1/(1+c_1+c)}}} \notag \\
     &\leq&  \log|\calR| + O\pth{\pth{t^{c}g(n)}^{\frac{1}{1+c_1+c}}n^{c_1/(1+c_1+c)}}  \notag \\
     &=&  \log|\calR| + O\pth{\pth{n^{c_1}t^{c}g(n)}^{\frac{1}{1+c_1+c}}}. \label{eqn:bd-max-cross-num} 
\end{eqnarray}
Here in the second inequality above, we have used the fact that 
\[
    \sum_{i=1}^{n/2-1}\frac{1}{(n-2i)^{\frac{1}{1+c_1+c}}} \leq n^{1-\frac{1}{1+c_1+c}}.
\]
Finally, let $c_{\max} := \max_{S\in \calR} c_S(n/2)$ denote the maximum number of crossings over all sets $S\in \calR$.
We have 
\[
    c_{\max} = \max_{S\in \calR}\log w_{n/2}(S) \leq \log w_{n/2}(\calR).
\]
By Equation~\eqref{eqn:bd-max-cross-num}, this is at most 
\[
    \log |\calR|+O\pth{\pth{n^{c_1}t^cg(n)}^{1/(1+c_1+c)}}.
\]
Since the VC dimension of the system is bounded (in fact, substituting $k=n$ in the 
shallow cell complexity, we see that the VC dimension is at most $2+c_1+c$), we get that $\log|\calR| = O(\log n)$, since the number of ranges 
is polynomial in the size of the universe $X$. 

This completes the proof of the theorem.
\end{proof}

It only remains to prove Corollary~\ref{cor:low-cross-match-list}. The corollary follows immediately from known bounds on the shallow cell 
complexity and the union complexity of various geometric set systems. These bounds are given in the table below. 

\begin{proof}[Proof of Corollary~\ref{cor:low-cross-match-list}]
The proof of the corollary follows by applying the shallow cell complexity functions in the table below, to the result of Theorem~\ref{thm:low-cross-match-dual}.
These bounds can be found in~\cite{DuttaGJM19} and~\cite{mustafa2022samplingbook}[see the list in Chapter 4.3] and the references therein. 
We also recommend the survey~\cite{APS08} for the interested reader. The bounds for orthants was proved in~\cite{10.5555/1283383.1283467}.

  \begin{table}[h]
    \begin{center}
    \begin{tabular}{|l|l|}
          \hline
          Objects & $m\psi(m,k)$  \\
          \hline
          Intervals in $\R$      & $m $. \\
          Half-spaces in $\R^2$      & $mk $. \\
          Homothets of a convex body in $\R^2$      & $mk$. \\
          Disks in $\R^2$      & $mk$. \\
          Pseudodisks in $\R^2$ & $mk$. \\
          $\alpha$-fat triangles in $\R^2$ & $mk\log^*\frac{m}{k}+mk\log^2\alpha$. \\
          Locally $\gamma$-fat semi-algebraic objects & $mk2^{O(\log^*m)}$. \\
          with bounded description complexity in $\R^2$ & \\
          Objects with union complexity $m\phi(m)$ & $m\phi(m/k)k$. \\
          Halfspaces in $\R^3$ & $mk$. \\
          Orthants in $\R^2$ and $\R^3$ & $mk$. \\
          \hline
    \end{tabular}
     \end{center}
     \vspace{10pt}
    \caption{Shallow-cell complexity bounds for different geometric set systems.}
  \end{table}

\end{proof}


\section{Conclusion}
\label{sec:conc}
   We have shown improved discrepancy bounds in set systems with near-linear shallow cell complexity, under the condition that each point belongs to a bounded number 
of sets. Though algorithmic considerations are not the focus of our paper, these are now briefly discussed. In order to obtain a low-discrepancy coloring, we need 
to compute a matching with low crossing number. Such algorithms have been studied by several authors e.g.~\cite{10.1145/73393.73397,Welzl1992,HarPel09,DBLP:conf/compgeom/CsikosM21}. 
For our purposes, it suffices to use the original algorithm of Welzl, which uses at most $O(n^3t)$ time in our case. We note that the faster algorithms of Csikos and 
Mustafa~\cite{DBLP:conf/compgeom/CsikosM21}
as well as Har-Peled~\cite{HarPel09} give matchings with slightly worse bounds on the crossing number (by a logarithmic factor in $n$), which results in a corresponding increase in the 
discrepancy.

\bibliographystyle{plainurl}
\bibliography{biblio}

\end{document}